\documentclass[10pt,conference]{IEEEtran}

\usepackage{amsmath,amssymb,amsthm}
\usepackage{graphicx}
\usepackage{booktabs}
\usepackage{xcolor}
\usepackage[colorlinks=true,allcolors=blue]{hyperref}
\usepackage{enumitem}

\newtheorem{proposition}{Proposition}
\newtheorem{corollary}[proposition]{Corollary}
\newtheorem{definition}{Definition}
\newtheorem{remark}{Remark}

\begin{document}

\title{Entanglement geometry separates circuit cutting, classical hardness, and trainability} 

\author{
\IEEEauthorblockN{María Gragera Garcés}
\IEEEauthorblockA{
\textit{Quantum Software Lab}\\
\textit{University of Edinburgh, UK}\\
 m.gragera.garces@ed.ac.uk}
\and
\IEEEauthorblockN{Sabina Dr\u{a}goi}
\IEEEauthorblockA{
\textit{ETH Z\"{u}rich}, Z\"{u}rich, Switzerland\\
\textit{IBM Research}, R\"{u}schlikon, Switzerland\\
sdragoi@phys.ethz.ch}
\and
\IEEEauthorblockN{Lirand\"{e} Pira}
\IEEEauthorblockA{
\textit{Centre for Quantum Technologies,}\\
\textit{National University of Singapore, Singapore}\\
lpira@nus.edu.sg}
}

\maketitle

\begin{abstract}
Circuit cutting promises to scale quantum computations beyond current hardware, but variational quantum advantage also requires low cutting overhead, classical hardness, and trainability.
We show that these properties are strongly constrained by entanglement geometry. 
Matrix product state (MPS) and tree tensor network (TTN) circuits with constant seam bond dimension can be cut with \(O(1/\varepsilon^2)\) sampling overhead, but remain efficiently classically simulable, ruling out asymptotic quantum advantage within these families. 
By independently controlling seam and intra-block entanglement, we construct a two-block circuit family that remains cheaply cuttable while requiring a super-polynomial global MPS bond dimension, as supported numerically up to \(n=100\). 
However, MPS hardness and trainability require incompatible depth regimes, \(d=\omega(\log n)\) and \(d=O(\log n)\), respectively. 
Using magic rather than entanglement as the hardness resource avoids this conflict: shallow Clifford+\(T\) circuits remain cuttable and trainable while their stabiliser-simulation cost grows exponentially with the \(T\)-count.
\end{abstract}

\begin{IEEEkeywords}
Circuit Cutting, Circuit Knitting, Distributed Quantum Computing, Quantum Machine Learning
\end{IEEEkeywords}

\section{Introduction}

Circuit cutting~\cite{peng2020,piveteau2024,lowe2023,piveteau2025} distributes a wide quantum circuit across small devices by replacing cross-device gates with local operations and classical post-processing.
Quantum links would enable lower overhead~\cite{piveteau2024,bechtold2023}, but are not yet broadly available.
Assuming the devices are connected by classical channels only, the sampling overhead grows exponentially in the number of cuts but stays independent of circuit size~\cite{piveteau2024}, so cutting is practical only when the seam, i.e., the bipartition between sub-circuits, carries bounded entanglement.
Bounded seam entanglement is, however, precisely the structure that implies classical simulability~\cite{cerezo2025}.
This raises the question of whether any circuit family is simultaneously cuttable at constant overhead and classically hard.
 
We establish three structural results.
Proposition~1 shows that matrix product state (MPS) and tree tensor network (TTN) circuits with constant seam bond dimension are simultaneously cuttable and classically simulable.
Proposition~2 shows this failure is not inherent: a two-block circuit with fixed seam width and increasing intra-block depth remains cheaply cuttable while its global simulation cost grows super-polynomially in $n$, confirmed by MPS simulation up to $n=100$.
Proposition~3 shows that the three-way question is provably negative for the brickwork family: the depth regimes required for MPS-hardness and for trainability are mutually exclusive.
Section~VI identifies the structural conditions for a positive answer and presents partial evidence via Clifford+T circuits.
Recent work has demonstrated that cutting-based distributed quantum learning models~\cite{marchisio2025,distest2026} and variational algorithms~\cite{bechtold2023qaoa,gentinetta2024} are operationally viable, has incorporated cut-count into circuit architecture search~\cite{qnas2026}, and has developed adaptive methods that discover low-entanglement cuts directly from a circuit's entanglement structure~\cite{ack2026}, but none addresses whether the resulting circuits possess quantum advantage, which Propositions 1 and 2 are the first to answer structurally.

\vspace{-0.45cm}
\section{Background}
\label{sec:bg}

Circuit cutting works by replacing each nonlocal gate~$\mathcal{U}$ at the
seam with a signed mixture of local operations,
$\mathcal{U} = \sum_i a_i \mathcal{F}_i$, where each $\mathcal{F}_i$ acts
entirely within one sub-circuit~\cite{piveteau2024}.
The expectation value of any observable is then reconstructed by Monte Carlo
sampling; the sampling overhead to achieve additive error~$\varepsilon$ is
$\#\text{shots} = \mathcal{O}\!\left(\gamma^{2k}/\varepsilon^2\right)$,
where $k$ is the number of cuts and $\gamma$ the per-cut $\ell_1$ factor,
with $\gamma\!=\!3$ for a CNOT gate cut and $\gamma\!=\!4$ for a wire cut
under local operations alone~\cite{piveteau2024}.
Classical communication between sub-circuits reduces wire-cut overhead
from $\mathcal{O}(16^k)$ to $\mathcal{O}(4^k)$, approaching the
teleportation limit~\cite{piveteau2024}; non-maximally entangled auxiliary
states interpolate smoothly between these regimes~\cite{bechtold2023}.
The cost grows \emph{exponentially in~$k$} but is otherwise
independent of circuit size: for fixed~$k$, the overhead is $O(1)$.

The minimum number of wire cuts at a bipartition equals
$\lceil\log_2\chi_{\rm seam}\rceil$, where $\chi_{\rm seam}$ is the Schmidt
rank of the state at the cut.
Cutting is therefore cheap if and only if the seam carries a bounded number of
ebits, i.e., $\chi_{\rm seam} = O(1)$.
MPS- and TTN-style ans\"{a}tze satisfy this by design: every inter-block bond
has bounded rank, so every bipartition aligned with a bond has
$\chi_{\rm seam} = O(1)$~\cite{pira2023,cong2019}.
For circuits where $\chi_{\rm seam}$ grows with system size, cutting overhead
scales accordingly and is no longer $O(1)$.
The generic factors $\gamma=3,4$ are worst-case upper bounds; concurrent
work~\cite{necaise2026} shows that an efficiently computable channel-level lower
bound $\gamma_{\rm Choi}$ follows from the operator-Schmidt (Choi) spectrum of
the cut unitary, generalizing the KAK cost to Gaussian fermionic unitaries.

In that vein, the same bounded entanglement that makes circuits cuttable also makes them
classically simulable.
For MPS and TTN circuits with bond dimension~$\chi$, standard sequential
tensor-network contraction evaluates any local expectation value in
$O(n\chi^3)$ time~\cite{verstraete2008}.
More broadly, Ref.~\cite{cerezo2025} shows that every standard
barren-plateau-free architecture lives in a polynomially sized subspace of
operator space, making its loss function efficiently computable either
classically (CSIM) or after a polynomial quantum data-acquisition step (QESIM).
The Lie-algebraic framework of~\cite{ragone2024} gives exact loss-variance
formulas for deep parameterized quantum circuits and unifies expressivity, locality, entanglement, and
noise as sources of barren plateaus. Bounded cuttability and classical
simulability share a common root: a small Schmidt rank at the seam limits both
the information communicated during cutting and the number of Schmidt values
needed to represent the state classically, biasing cut-compatible circuits
toward classical simulability.

The literature asks whether non-classically-simulable circuits can
simultaneously be useful for learning and cheaply cuttable~\cite{robustness2025};
adding trainability, with absence of barren plateaus as the standard proxy,
makes this an explicitly open three-way question. The analysis above suggests
the answer is negative for all standard cuttable families; the propositions
below make this precise and identify the structural condition for an
affirmative answer.

\vspace{-0.2cm}
\section{Negative Baseline}
\label{sec:neg}

MPS- and TTN-style circuits are natural candidates for cutting-based learning:
each bond is a seam with bounded Schmidt rank, so cutting is cheap at any depth.
One might expect them to be good vehicles for quantum advantage; the following
proposition shows the opposite, as the same global bounded-rank property also
enables efficient classical simulation.

\begin{proposition}[Negative baseline]\label{prop:neg}
  Let $C$ be an $n$-qubit MPS or TTN variational circuit with uniform bond
  dimension $\chi = O(1)$ (i.e., the Schmidt rank at every
  \emph{contiguous} bipartition is at most~$\chi$). Then:
  \begin{enumerate}[label=(\roman*),leftmargin=1.5em,topsep=1pt,itemsep=0pt]
    \item The circuit can be cut at any bipartition with sampling overhead
      $\mathcal{O}(\chi^{2\log_2\gamma}/\varepsilon^2)
       = \mathcal{O}(1/\varepsilon^2)$,
      independent of~$n$~\emph{\cite{piveteau2024}}.
    \item Expectation values of local observables are computable classically
      in $O(n\chi^3)$ time, placing $C \in \mathrm{CSIM}$
      ~\emph{\cite{verstraete2008,cerezo2025}}.
  \end{enumerate}
\end{proposition}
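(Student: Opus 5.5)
The proof has two independent parts, one per item, and each reduces to a result quoted in Section~\ref{sec:bg}.

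For (i), fix a contiguous bipartition $A|B$ aligned with a bond of the network. By hypothesis the state has Schmidt rank at most $\chi$ across this cut. In an MPS or TTN circuit, the whole dependence between $A$ and $B$ passes through that single bond, since removing one tree edge disconnects the network. So I would write the circuit as a block on $A$, a block on $B$, and a channel of dimension $\chi$ between them. That channel is carried by $k=\lceil\log_2\chi\rceil$ qubit wires, which matches the cut-count identity $k=\lceil\log_2\chi_{\rm seam}\rceil$ from Section~\ref{sec:bg}. Each wire is then replaced by its quasi-probability decomposition with $\ell_1$ factor $\gamma$ from~\cite{piveteau2024}, giving total factor $\gamma^{k}$. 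The Monte Carlo shot bound is then $O(\gamma^{2k}/\varepsilon^2)$. Using $\gamma^{2\lceil\log_2\chi\rceil}\le\gamma^{2}\,\chi^{2\log_2\gamma}$, this is $O(\chi^{2\log_2\gamma}/\varepsilon^2)$. Since $\chi=O(1)$ and $\gamma\in\{3,4\}$ are constants, it equals $O(1/\varepsilon^2)$, with no dependence on $n$.

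The step I expect to be the real obstacle is this one: justifying that the circuit \emph{as a process} can be factored through $\lceil\log_2\chi\rceil$ wires. A Schmidt-rank bound on the output state alone does not give this. I would use the structure of the ansatz. In a sequential MPS or TTN circuit, every parameterised block acts on a bond register of $\log_2\chi$ qubits together with local physical qubits. So the wires crossing a bond are literally the bond qubits, and cutting them is a wire cut of exactly $\log_2\chi$ qubits (for $\chi$ a power of two, otherwise round up). For a TTN cut that splits a subtree from the rest, the same holds because the subtree connects to the rest through one edge. The phrase "at any bipartition" must therefore be read as any bond-aligned, contiguous bipartition, as in the statement's parenthetical.

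For (ii), I would contract the network directly. Write $\langle O\rangle=\langle\psi|O|\psi\rangle$ with $|\psi\rangle$ given by the MPS or TTN with bond dimension $\chi$ and constant physical dimension. For an MPS, sweep left to right, maintaining a $\chi\times\chi$ environment matrix. Each site update absorbs one tensor and its conjugate, which costs $O(\chi^3)$, and a local $O$ is inserted at its support. Over $n$ sites this totals $O(n\chi^3)$, following~\cite{verstraete2008}. For a TTN, contract from the leaves toward the root. Each of the $O(n)$ nodes merges children environments at polynomial cost in $\chi$, which is $O(\chi^3)$ per node for binary trees if contractions are ordered pairwise, and this again gives $O(n\chi^3)$. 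Since the cost is polynomial in $n$ for $\chi=O(1)$, the loss is classically computable, and hence $C\in\mathrm{CSIM}$ in the sense of~\cite{cerezo2025}.
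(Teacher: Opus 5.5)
Your proof is correct in substance and has the paper's structure: $\lceil\log_2\chi\rceil$ wire cuts at $\gamma^2$ each for (i), and sequential or leaves-to-root contraction for (ii). The real difference is how you justify the cut count in (i). The paper goes directly from the state-level hypothesis (Schmidt rank at most $\chi$ at every contiguous bipartition) to ``$\lceil\log_2\chi\rceil$ wire cuts suffice,'' via the cut-count identity of Section~\ref{sec:bg}. You instead note that a Schmidt-rank bound on the output state does not factor the circuit, as a process, through that many wires, and you read the wire count off the ansatz itself (the bond register, or the single edge above a subtree). The paper's route is shorter and nominally covers every contiguous bipartition. Yours is the sound one, because quasi-probability cutting overhead is a property of the circuit's wiring, not of the state it outputs. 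It also correctly restricts TTN cuts to single-edge (subtree) cuts: a general interval of leaves can only be separated from its complement by cutting several tree edges (up to $O(\log n)$ in a balanced tree), each carrying its own bond register. You also make explicit the bound $\gamma^{2\lceil\log_2\chi\rceil}\le\gamma^{2}\chi^{2\log_2\gamma}$, which links the paper's $\gamma^{2m}$ to the form in the statement. One slip in (ii) concerns binary TTNs whose nodes carry three $\chi$-dimensional legs. Absorbing a node, its conjugate and two child environments then costs $O(\chi^4)$ under any pairwise ordering, not $O(\chi^3)$, so the honest TTN count is $O(n\chi^4)$. This is harmless for $\chi=O(1)$ and for membership in CSIM, and the paper's proof asserts $O(n\chi^3)$ for TTNs without addressing the point either.
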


\begin{proof}
(i)~The Schmidt rank at any contiguous bipartition is at most~$\chi$ by the MPS/TTN
structure, so $m = \lceil\log_2\chi\rceil = O(1)$ wire cuts suffice.
Each cut multiplies the sampling overhead by~$\gamma^2$, giving total
overhead $\gamma^{2m}/\varepsilon^2 = O(1/\varepsilon^2)$, independent
of~$n$ and circuit depth.
(ii)~An MPS (or TTN, via tree contraction) with bond dimension~$\chi$
has at most~$\chi$ Schmidt values at every bipartition. Evaluating
$\langle O\rangle$ for any local observable costs $O(n\chi^3)$ operations
by sequential contraction~\cite{verstraete2008}, giving $O(n)$ for
constant~$\chi$ and placing~$C$ in CSIM.
\end{proof}

\begin{remark}
  At depth $d = O(\log n)$ with area-law initial states and local
  observables, these circuits are additionally free of barren plateaus
  \cite{cerezo2025}: the loss variance is $\Omega(1/\mathrm{poly}(n))$,
  which is equivalent to the canonical gradient-variance and
  cost-concentration criteria~\cite{arrasmith2022}.
  At shallow depth the architecture is therefore simultaneously cuttable at $O(1/\varepsilon^2)$ overhead,
  efficiently classically simulable, and free of barren plateaus.
\end{remark}

The key assumption in Proposition~\ref{prop:neg} is that $\chi=O(1)$ holds globally, across every bipartition.
Bounding the bond dimension only at the seam while allowing high internal entanglement can preserve cheap cuttability without efficient classical simulability.
We exploit this geometric insight to construct a cheaply cuttable circuit family with super-polynomial global MPS bond dimension.

\section{The Two-Block Wedge}
\label{sec:wedge}

We realize the separation between bounded seam entanglement and growing intra-block entanglement using a two-block circuit in which we can independently control the amount of seam gates across the $A$--$B$ boundary, and the intra-block gates acting entirely within each block, as formalized in Definition~\ref{def:2block}.

\begin{definition}[2-Block Circuit]\label{def:2block}
  $C(n,d,k)$ is an $n$-qubit circuit with:
  \begin{enumerate}
    \item \emph{Block~$A$}: qubits $1,\ldots,n/2$, a depth-$d$ 1D brickwork with
  nearest-neighbour two-qubit gates;
    \item \emph{Block~$B$}: qubits $n/2{+}1,\ldots,n$, same structure;
    \item \emph{Seam}: $k$ two-qubit gates on the $A$--$B$ boundary, applied
  after all block-$A$ gates and before all block-$B$ gates.
  The seam width~$k$ is fixed; it does not scale with either~$n$ or~$d$.
  \end{enumerate}
\end{definition}

\begin{proposition}[Two-block wedge]\label{prop:wedge}
  Let $C(n,d,k)$ be as in Definition~\ref{def:2block} with generic (hence non-Clifford)
  two-qubit intra-block gates and fixed~$k$. Then:
  \begin{enumerate}[label=(\roman*),leftmargin=1.5em,topsep=1pt,itemsep=0pt]
    \item The circuit-cutting cost at the seam is
      $\mathcal{O}(\gamma^{2k}/\varepsilon^2)$, independent of~$d$
      and~$n$~\emph{\cite{piveteau2024}}.
    \item With high probability over the intra-block gate choices, the
      entanglement entropy at any fixed internal bipartition of each block
      satisfies $S_{\rm block}(d) = \Theta(\min(d,\,n/4))$~\emph{\cite{nahum2017}}.
      Any MPS approximation to the global state at fixed precision requires
      $\chi_{\rm global} = \Omega(2^{\Theta(d)})$~\emph{\cite{verstraete2008,eisert2010}}.
  \end{enumerate}
  For $d = \omega(\log n)$: $\chi_{\rm global} = n^{\omega(1)}$, so global MPS
  contraction costs $\omega(\mathrm{poly}(n))$, while the seam cutting cost
  remains $O(1/\varepsilon^2)$, independent of~$n$.
  The two-block wedge is non-empty.
\end{proposition}

\begin{proof}
For~(i), prior to the seam gates the state is a product across the
$A$--$B$ bipartition with Schmidt rank~1. Each seam gate raises this rank
by at most one bit, giving $\chi_{\rm seam} \leq 2^k$. After the seam,
block-$B$ gates act unitarily within~$B$; they transform Schmidt vectors on
the $B$-side but leave Schmidt values unchanged, so $\chi_{\rm seam}$ is
bounded at $O(1)$ for all~$d$. Thus $k$ wire cuts suffice, with overhead
$\gamma^{2k}/\varepsilon^2 = O(1/\varepsilon^2)$.

For~(ii), consider the bipartition $\{1,\ldots,n/4\}$ vs.\ the rest, a cut
strictly inside block~$A$. The seam gates act on qubit~$n/2$ and adjacent
pairs; block-$B$ gates act entirely within~$B$. Neither touches this internal
bipartition, whose entanglement is therefore determined solely by the block-$A$
brickwork. By~\cite{nahum2017}, the middle bipartition entropy satisfies
$S_{\rm block}(d) = \Theta(d)$ with high probability for generic SU(4) brickwork
with $d \leq n/4$. For $d = O(\log n)$ this gives
$2^{O(\log n)} = \mathrm{poly}(n)$ bond dimension and polynomial contraction
cost. For $d = \omega(\log n)$, any $\varepsilon$-accurate global MPS requires
$\chi_{\rm global} \geq 2^{\Theta(d)} = n^{\omega(1)}$
\cite{verstraete2008,eisert2010}, so contraction costs $\omega(\mathrm{poly}(n))$.
The seam and the internal-$A$ bipartition are geometrically disjoint, so the
two conclusions are independent.
\end{proof}

\begin{remark}[Blockwise simulation does not help]
The super-polynomial bond dimension of Proposition~\ref{prop:wedge}(ii) sits at
an internal bipartition strictly inside block~$A$, not at the seam. Splitting at
the $A$--$B$ seam and simulating each half separately does not avoid it:
block~$A$ alone still requires $\chi = 2^{\Theta(d)}$. The separation is against
global sequential MPS contraction, not circumvented by seam-blockwise
decomposition.
\end{remark}

Proposition~\ref{prop:wedge} makes depth~$d$ an independent lever for classical
hardness at fixed cutting cost; Proposition~\ref{prop:notriple} asks whether
that depth still preserves a navigable loss landscape.

\begin{proposition}[Three-way impossibility]\label{prop:notriple}
  For $C(n,d,k)$ as in Definition~\ref{def:2block} with area-law initial
  state~$\rho$ and local observable~$O$, the three conditions
  \begin{enumerate}[label=(\roman*),leftmargin=1.5em,topsep=1pt,itemsep=0pt]
    \item cheaply cuttable: overhead $O(\gamma^{2k}/\varepsilon^2)$,
    \item not efficiently simulable via global MPS, and
    \item trainable: loss variance $\Omega(1/\mathrm{poly}(n))$,
  \end{enumerate}
  cannot all hold simultaneously. Condition~(i) holds for all~$d$
  by Proposition~\ref{prop:wedge}(i). Conditions~(ii) and~(iii)
  require $d = \omega(\log n)$ and $d = O(\log n)$ respectively;
  no depth satisfies both.
\end{proposition}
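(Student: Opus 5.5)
The plan is to reduce the three-way statement to a depth-threshold argument: show that (ii) forces $d=\omega(\log n)$ and that (iii) forces $d=O(\log n)$. Since (i) holds at every depth by Proposition~\ref{prop:wedge}(i), the conditions are then jointly inconsistent, because no function $d(n)$ is both $\omega(\log n)$ and $O(\log n)$. I would handle the two directions separately and make explicit which direction of each implication the argument needs.

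\textbf{Step 1: hardness forces $d=\omega(\log n)$.} I argue by contrapositive. Suppose $d=O(\log n)$. A depth-$d$ nearest-neighbour brickwork on each block increases the Schmidt rank across any contiguous internal cut by at most a factor $4$ per layer, since a cut is crossed by at most one gate per layer. Hence $\chi\le 4^{d}$ at every contiguous bipartition inside $A$ or inside $B$. Cuts that straddle the seam see at most an extra factor $2^{2k}$ from the $k$ seam gates, which is constant. The initial state is area-law, so it contributes at most a constant factor as well. Together these give an exact global MPS with $\chi_{\rm global}\le 2^{O(d)+O(k)}=\mathrm{poly}(n)$. Then, as in Proposition~\ref{prop:neg}(ii), local expectation values cost $O(n\chi^3)=\mathrm{poly}(n)$, so (ii) fails. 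This direction is a rigorous light-cone counting bound, matching the upper half of $S_{\rm block}(d)=\Theta(\min(d,n/4))$ in Proposition~\ref{prop:wedge}(ii).

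\textbf{Step 2: trainability forces $d=O(\log n)$.} I would again argue by contrapositive: assume $d=\omega(\log n)$ and show the loss variance is super-polynomially small. With generic (Haar-like) two-qubit gates, a local brickwork of depth $d$ forms an approximate $2$-design on the backward light cone of the local observable $O$. That light cone has width $\Theta(\min(d,n/2))$, so it spans $\omega(\log n)$ qubits. Standard light-cone variance bounds for local cost functions on local $2$-design brickworks (\cite{cerezo2025}, \cite{ragone2024}) then give
$\mathrm{Var}\,\mathcal{L}\le 2^{-\Omega(\min(d,n))}=n^{-\omega(1)}$,
which violates (iii). The seam gates only enlarge the light cone, so they cannot restore variance. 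For the converse regime I would invoke the Remark after Proposition~\ref{prop:neg}: for $d=O(\log n)$, an area-law initial state and local $O$ give variance $\Omega(1/\mathrm{poly}(n))$. This shows the two regimes are exactly the ones named in the statement.

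\textbf{Step 3: combine.} Conditions (ii) and (iii) require $d=\omega(\log n)$ and $d=O(\log n)$ respectively, so no depth satisfies both. Since (i) holds at every depth, the three conditions cannot hold simultaneously.

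\textbf{Main obstacle.} The hard part is Step 2's upper bound on the variance. The cited results are most cleanly stated for global $2$-designs or for the shallow regime, not for a fixed-width light cone of depth $\omega(\log n)$. My first attempt would be to use the known convergence of random brickworks to approximate $2$-designs in depth linear in the light-cone width, truncated to the light cone. I would then apply the local-cost variance formula, whose decay is exponential in the number of qubits the design covers. If that fails, the fallback is to state the trainability side relative to the cited sufficient conditions, i.e.\ as the standard barren-plateau regime, rather than as an unconditional bound.
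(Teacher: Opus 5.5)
Your argument has the paper's skeleton: (i) is inherited from Proposition~\ref{prop:wedge}(i), (ii) forces super-logarithmic depth, (iii) forces logarithmic depth, and the two are combined. At the paper's level of rigour it goes through. The differences lie in how you justify the two depth constraints. For (ii) the paper appeals to Proposition~\ref{prop:wedge}(ii), i.e.\ to the typical entropy growth $S_{\rm block}=\Theta(d)$ of random brickwork. That holds only with high probability and, being an entropy statement, does not by itself bound the bond dimension from above. Your light-cone count gives $\chi\le\chi_0\,4^{d+k}$ at every contiguous cut, since each cut is crossed by at most one brick per layer plus at most $k$ seam gates, each of operator Schmidt rank at most $4$. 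This is more elementary, deterministic and gate-set independent, and it is exactly the upper-bound direction the impossibility needs. One repair: an entropy area law for $\rho$ does not bound its Schmidt rank. Take $\rho$ to be a product state or an MPS of bond dimension $\chi_0=\mathrm{poly}(n)$, which is what the statement implicitly means.

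For (iii) the paper only cites the barren-plateau literature, so you are not missing anything it supplies. Your primary mechanism, however, would fail as stated, for three reasons:
\begin{enumerate}
\item The backward light cone of a depth-$d$ brickwork is a triangle whose outer qubits are touched by only $O(1)$ gates, so it is not an approximate $2$-design on its $\Theta(d)$ qubits.
\item The light-cone variance bounds for local costs are lower bounds, so they give the shallow direction, not the one you need.
\item The exact formulas of~\cite{ragone2024} require convergence to a design, i.e.\ depth $\Omega(n)$.
\end{enumerate}
A direct second-moment computation avoids all of this. For a Haar two-qubit gate $V$ and a non-identity Pauli $P$ on its support, $\mathbb{E}[(V^\dagger PV)^{\otimes 2}]=\tfrac{1}{15}\sum_{Q\neq I}Q^{\otimes 2}$. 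Hence, for Pauli $O$ and input $|0\rangle^{\otimes n}$, the loss variance equals the probability that the induced random walk on Pauli strings ends in $\{I,Z\}^{\otimes n}$. This probability is $e^{-\Theta(\min(d,n))}$: the support spreads ballistically, and each earliest-layer brick on it lands in $\{I,Z\}^{\otimes 2}$ only with probability $1/5$. The $k$ seam gates change this only by a constant factor.

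Finally, your two contrapositives as written give only (ii)$\Rightarrow d\neq O(\log n)$ and (iii)$\Rightarrow d\neq\omega(\log n)$. An oscillating $d(n)$ satisfies both simultaneously. Since the variance bound is pointwise in $n$, apply it along a subsequence with $d/\log n\to\infty$ to obtain (iii)$\Rightarrow d=O(\log n)$, which closes the argument. The paper's own phrasing glosses over the same point.
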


\begin{proof}
Condition~(i) holds unconditionally by Proposition~\ref{prop:wedge}(i).
Condition~(ii) requires $d = \omega(\log n)$ by Proposition~\ref{prop:wedge}(ii).
Condition~(iii) requires $d = O(\log n)$: Ref.~\cite{cerezo2025}
proves that shallow 1D brickwork with area-law~$\rho$ and local~$O$ satisfies
loss variance $\Omega(1/\mathrm{poly}(n))$---equivalent to absence of
exponentially vanishing gradients~\cite{arrasmith2022}---while at
$d = \omega(\log n)$ the same entanglement growth that drives hardness also
causes exponential gradient concentration~\cite{cerezo2025,arrasmith2022}. The threshold $d^* = \Theta(\log n)$ is where
both transitions occur simultaneously; no depth satisfies both (ii)
and~(iii).
\end{proof}
Condition~(iii) characterises trainability via gradient concentration at random
initialisation; it rules out gradient-based training in the standard operational
model, but does not preclude structured initialisation or non-gradient optimisers.

\vspace{-0.1cm}
\section{Numerical Evidence}
\label{sec:num}

\begin{figure}[t]
  \centering
  \includegraphics[width=\columnwidth]{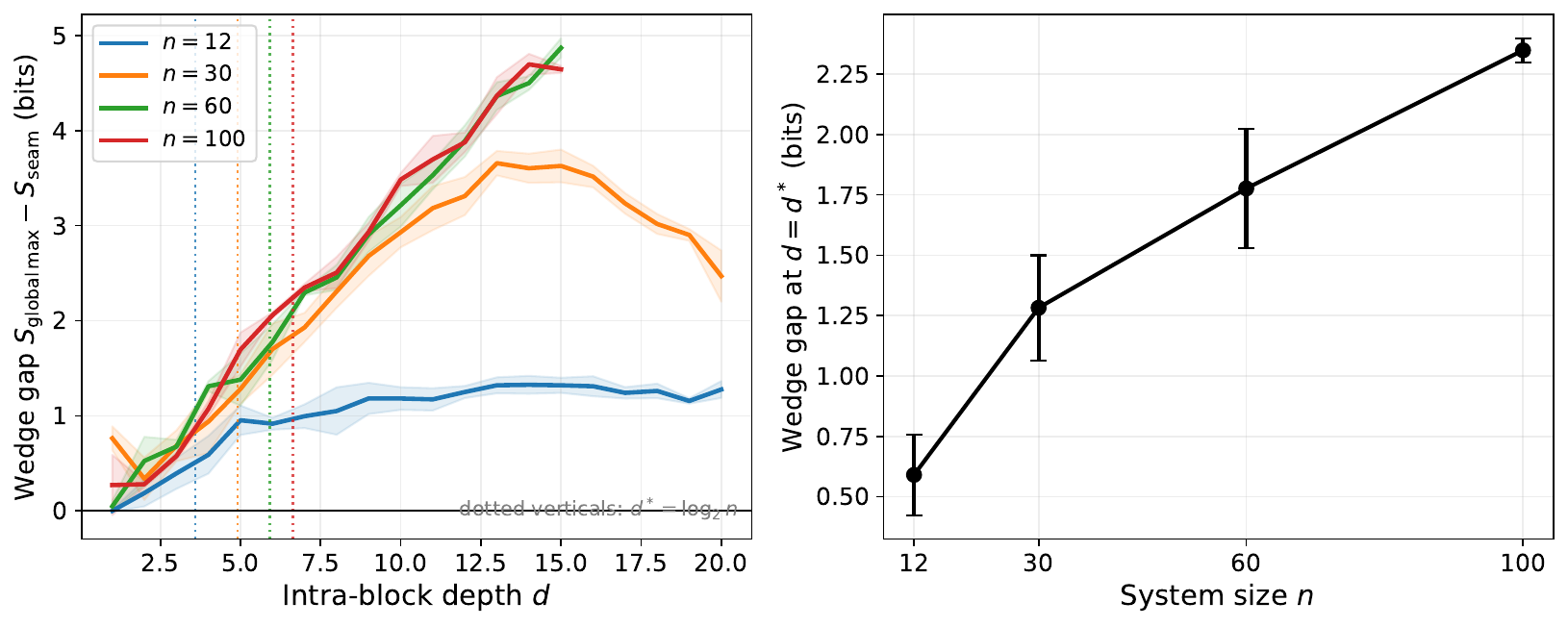}
  \caption{%
    MPS verification of the wedge across $n \in \{12, 30, 60, 100\}$
    ($k=1$ seam CNOT, random SU(4) brickwork, bond dimension $\chi\leq 256$).
    \emph{Left}: wedge gap $S_{\rm global\,max} - S_{\rm seam}$ vs.\
    intra-block depth~$d$; dotted verticals mark $d^* = \log_2 n$ per curve.
    $S_{\rm seam}$ stays below~$1$ bit at every~$n$ and~$d$; the gap opens
    and grows with~$n$.
    \emph{Right}: gap at $d = d^*$ vs.\ $n$, growing as $\Theta(\log_2 n)$.
    The wedge is a structural property, not a finite-size artefact.
  }
  \label{fig:wedge}
  \vspace{-0.6cm}
\end{figure}

Proposition~\ref{prop:wedge} predicts $S_{\rm seam} \leq k$ bits for all $d$
and $n$, while $S_{\rm global\,max}$ grows with $d$ inside each block.
To verify this at operationally relevant scales we simulate $C(n, d, 1)$ via
MPS (quimb~\cite{gray2018}), sweeping $n \in \{12, 30, 60, 100\}$ and
$d = 1, \ldots, d_{\rm max}$, with bond dimension truncated at
$\chi \leq 256$; this gives a lower bound on $S_{\rm global\,max}$ while
leaving $S_{\rm seam}$ unaffected (its true rank is at most $2^k = 2$).

Figure~\ref{fig:wedge} confirms both predictions at all scales.
$S_{\rm seam}$ saturates near $1$ bit at each~$n$, independent of depth,
while $S_{\rm global\,max}$ grows continuously with~$d$~\cite{nahum2017}.
At the trainability threshold $d^* = \log_2 n$, the measured gaps are
$0.59$, $1.28$, $1.78$, and $2.35$ bits for $n = 12, 30, 60, 100$
respectively — growing as $\Theta(\log_2 n)$, consistent with
$S_{\rm global\,max}(d^*) \approx d^* = \log_2 n$ and $S_{\rm seam} \approx 1$.
The right panel confirms this gap is a structural consequence of the geometric
decoupling of Proposition~\ref{prop:wedge}, not a finite-size artefact.

\vspace{-0.1cm}
\section{Stabiliser Hardness and Open Questions}
\label{sec:open}

Proposition~\ref{prop:notriple} rules out all three conditions when hardness
means MPS-hardness, a resource tied to depth.
Clifford$+T$ circuits at shallow depth carry low entanglement but accumulate
\emph{magic} as the $T$-count grows.
Two regimes must be distinguished here.
In the \emph{fault-tolerant} setting, $T$-gates are expensive: they require
magic state distillation, a substantial overhead absent from Clifford gates.
In the \emph{NISQ} setting, $T$-gates are native hardware operations with no
additional cost over any other single-qubit rotation, so the quantum execution
overhead is zero.
The $\Omega(3^t)$ lower bound below is a \emph{classical-simulation} complexity
statement (it measures the cost of classically simulating the circuit) and applies in
both regimes; the NISQ/fault-tolerant distinction affects only the cost of
\emph{running} the circuit on quantum hardware.
Because stabiliser simulation cost scales with $T$-count rather than depth,
the depth conflict no longer applies.

\begin{corollary}[Stabiliser-hardness instance]\label{cor:cliffT}
  Let $C(n,d,k)$ be as in Definition~\ref{def:2block} with Clifford$+T$
  intra-block gates at fixed depth $d = O(\log n)$, area-law initial
  state~$\rho$, and local observable~$O$. As the $T$-count $t$ per block
  grows with~$n$, conditions~(i) and~(iii) of Proposition~\ref{prop:notriple}
  hold simultaneously, and the stabiliser-simulation cost of the global state
  grows at least as $\Omega(3^t)$.
\end{corollary}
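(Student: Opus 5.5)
The corollary has three separate claims: condition~(i), condition~(iii), and the $\Omega(3^t)$ bound. I would prove each by reducing it to a result already used in the paper, then observe that none of the three places a lower bound on depth, so all hold at the same fixed $d = O(\log n)$.

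\textbf{Condition~(i).} I would reuse the proof of Proposition~\ref{prop:wedge}(i) verbatim. That argument only uses three facts:
\begin{itemize}
\item before the seam, the state is a product across the $A$--$B$ bipartition;
\item each of the $k$ seam gates raises the Schmidt rank by at most a factor~$2$;
\item block-$B$ unitaries leave the Schmidt values unchanged.
\end{itemize}
None of these refers to the gate set, so Clifford$+T$ intra-block gates are covered. This gives $\chi_{\rm seam}\le 2^k$ and overhead $O(\gamma^{2k}/\varepsilon^2)$, independent of $n$, $d$ and $t$.

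\textbf{Condition~(iii).} I would invoke the same result of~\cite{cerezo2025} used in the proof of Proposition~\ref{prop:notriple}: a shallow 1D brickwork of depth $d=O(\log n)$, with area-law $\rho$ and local $O$, has loss variance $\Omega(1/\mathrm{poly}(n))$. The delicate point is that this result is usually stated for ensembles of local gates with a parameterized, locally random structure, not for discrete Clifford$+T$ gates. My plan is to argue via the light cone. For local $O$ at depth $O(\log n)$, the backward light cone has width $O(\log n)$, and hence contains $\mathrm{poly}(n)$-dimensional operator support. The variance is then a sum over this polynomially sized operator space.

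For this to go through, I would take the trainable parameters to be single-qubit rotations interleaved with the Clifford$+T$ skeleton. The $T$ gates then only fix the skeleton and do not change the light-cone counting. If a clean reduction fails, I would state (iii) under this parameterized-rotation assumption explicitly. I expect this step to be the main obstacle, since the variance bound must be made robust to a non-Haar, discrete gate set.

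\textbf{Hardness bound.} I would place $t$ $T$-gates per block, chosen so that they are not Clifford-absorbable. For example, each one acts on a fresh stabiliser direction, or equivalently the magic state $|T\rangle^{\otimes t}$ is injected via gadgets. I would then invoke the known stabiliser-rank/extent lower bounds: any stabiliser decomposition of the resulting state in the Bravyi--Gosset style simulators (stabiliser extent, or the sum-over-Cliffords/Pauli quasiprobability with robustness) has cost scaling as $\xi(|T\rangle)^{t}$ or similar per-gate multiplicative factors.

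The constant $3$ would come from the quasiprobability (Pauli-basis) channel simulation of each $T$ gate, in the same spirit as the $\gamma=3$ factor. A non-Clifford gate's decomposition into Clifford channels has negativity such that the sampling cost multiplies by at least a constant per gate. I would fix the constant to match the stated $3^t$ by citing the appropriate per-gate decomposition count, i.e.\ branching into three stabiliser terms per $T$. Multiplicativity of the chosen monotone across the $t$ gates gives $\Omega(3^t)$ for the global state. The seam gates and block $B$ can only be Clifford or add further $T$ gates, so they do not reduce this cost.

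Finally, I would note that the depth conflict of Proposition~\ref{prop:notriple} disappears: $t$ can grow with $n$ (up to $O(n\log n)$ gate slots) while $d$ stays $O(\log n)$. This yields (i), (iii) and the $\Omega(3^t)$ bound simultaneously.
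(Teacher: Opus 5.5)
Your argument for condition~(i) matches the paper's, which simply invokes Proposition~\ref{prop:wedge}(i). One caveat: your factor-$2$-per-gate Schmidt-rank step needs the $B$-side qubits to be unentangled with the rest of $B$ when the seam acts, and an entangled area-law $\rho$ need not give that. Gate-cutting the $k$ seam gates directly yields the $\gamma^{2k}$ overhead regardless. For~(iii) you make the same appeal to~\cite{cerezo2025} as the paper, which just asserts that the guarantee depends on depth and locality, not on the gate set. You are right that this hides an assumption. Note, though, that the light-cone count alone gives no variance lower bound. A $\mathrm{poly}(n)$-dimensional operator space helps only once you posit a local 2-design on the parameterised gates inside the light cone, or compute the variance for an explicit rotation distribution. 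On these two points you are at least at the paper's level of rigour.

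The genuine gap is the hardness bound. A lower bound on stabiliser-simulation cost needs a quantity minimised over \emph{all} stabiliser decompositions. The number of terms in one decomposition (``three stabiliser terms per $T$'') only upper-bounds the cost of the simulator built from it, and choosing that count ``to match the stated $3^t$'' is circular. None of the monotones you list gives base~$3$:
the stabiliser extent of $T|+\rangle$ is $1/\cos^2(\pi/8)\approx 1.17$, which is multiplicative and so gives only $\approx 1.17^t$;
no exponential lower bound on the stabiliser rank of $T|+\rangle^{\otimes t}$ is known;
and the CNOT gate-cutting factor $\gamma=3$ is unrelated to simulating $T$ gates.
So ``multiplicativity of the chosen monotone'' is asserted for a monotone you never pin down.

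The mechanism the paper uses, and which you are missing, has two parts. First, the stabiliser $\ell_1$-norm (robustness of magic) of~\cite{howard2017} lower-bounds the $\ell_1$-norm of every quasi-probability stabiliser decomposition of the injected magic states. Second, the Monte Carlo sample count scales as the \emph{square} of that norm. The paper's $3^t$ is therefore $\bigl((\sqrt3)^t\bigr)^2$, not three branches per gate.

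If you adopt this route, expect to hit exactly the step you glossed over. For $T|+\rangle$ the single-copy robustness is $\sqrt2$; $\sqrt3$ is the value for the Clifford-face magic state. Robustness is also only submultiplicative across copies, strictly so for $T|+\rangle^{\otimes t}$. A clean $c^{t}$ bound thus requires a genuinely multiplicative monotone with the right per-copy value, and it would constrain quasi-probability sampling simulators specifically.
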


\begin{proof}
  Condition~(i) holds for all~$d$ by Proposition~\ref{prop:wedge}(i).
  Condition~(iii) holds because $d = O(\log n)$: the argument of
  Proposition~\ref{prop:notriple} applies to Clifford$+T$ circuits since the
  trainability guarantee of~\cite{cerezo2025} depends on circuit depth and
  observable locality, not on the specific gate set.
  For the hardness claim, Ref.~\cite{howard2017} proves that
  the stabiliser $\ell_1$-norm of $t$ independent $T$-gate magic states is
  \emph{exactly} $(\sqrt{3})^t = 3^{t/2}$, and that this quantity is a
  lower bound on the $\ell_1$ norm of any quasi-probability stabiliser
  decomposition of those states. Any stabiliser simulation via
  quasi-probability sampling therefore requires at least
  $(3^{t/2})^2 = \Omega(3^t)$ samples.
\end{proof}

\vspace{-0.2cm}

Figure~\ref{fig:corollary1} verifies Corollary~\ref{cor:cliffT} on
$C(12,d{=}3,k{=}1)$ (50 seeds, 200 parameter samples each,
$t \in \{0,2,\ldots,18\}$). Across all~$t$, $S_{\rm seam}\in[0.83,0.90]$ bits
(below the $k{=}1$ ceiling) and $\mathrm{Var}[\langle Z_5\rangle]\in[0.13,0.21]$,
confirming conditions~(i) and~(iii) are independent of $T$-count, while the
stabiliser sampling overhead $3^t$ grows from $1$ to ${\approx}3.9\times10^8$.
\begin{figure}[t]
  \centering
  \includegraphics[width=\columnwidth]{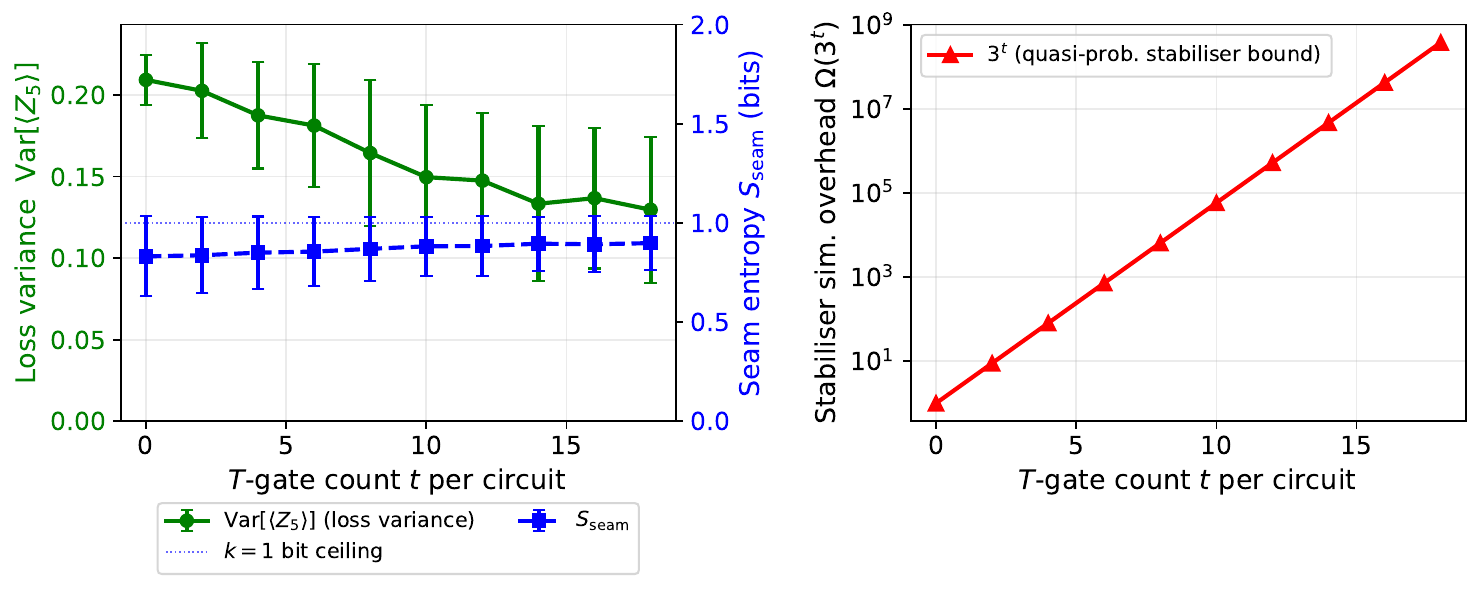}
  \caption{%
    Corollary~\ref{cor:cliffT} on $C(12,d{=}3,k{=}1)$;
    50 seeds $\times$ 200 samples.
    \emph{Left}: $S_{\rm seam}$ (blue, right axis) stays below the $k{=}1$ ceiling
    and loss variance (green, left axis) stays in $[0.13,0.21]$ across all $T$-counts.
    \emph{Right}: sampling overhead $3^t$ rises eight orders of magnitude while
    both metrics stay flat, confirming magic as the independent hardness lever.
    Mean $\pm$ 1\,s.d.}
  \label{fig:corollary1}
  \vspace{-0.6cm}
\end{figure}

Note that simulability by
\emph{one} classical method already implies trainability (caveat of Ref.~\cite{cerezo2025}), so a trainable circuit
need not be hard for \emph{all}. Shallow Clifford$+T$ circuits
at $d = O(\log n)$ keep $\chi_{\rm global} = \mathrm{poly}(n)$, remaining
efficiently MPS-simulable, hence trainable and stabiliser-hard. 
Universal classical hardness at shallow depth remains open.

New trainability arguments that survive the deep regime of
Proposition~\ref{prop:notriple} would yield a positive answer.
Note too that $t$ and $d$ are not fully independent: a depth-$d$ circuit on
$n$ qubits holds at most $O(dn)$ $T$-gates, so Corollary~\ref{cor:cliffT} is
best read in the regime $t = o(dn)$, where $d = O(\log n)$ is maintained; beyond
it, added $T$-gates force $d$ to grow and re-enter the barren-plateau regime.
It remains open whether circuit-cutting-based quantum advantage requires seam--interior decoupling.

\vspace{-0.1cm}
\section{Conclusion}
\label{sec:conclusion}

This paper asks whether a variational circuit can be cheaply cuttable,
classically hard, and trainable at the same time. The answer is set by
entanglement geometry.
Seam gates and intra-block gates act on disjoint parts of the circuit,
making them independent knobs.
MPS and TTN architectures lose this independence by imposing a global
bond-dimension bound, making them simultaneously cuttable, simulable, and
trainable by Proposition~\ref{prop:neg}.
The two-block family of Proposition~\ref{prop:wedge} restores it and exposes
a two-tier hardness picture.
When entanglement is the hardness resource, depth is the lever, but the depth
that pushes interior entanglement past the MPS threshold also induces barren
plateaus, so by Proposition~\ref{prop:notriple} hardness and trainability cannot
coexist.
Switching to magic breaks this coupling.
Shallow Clifford+T circuits stay MPS-simulable and trainable while their
stabiliser-simulation cost grows as $\Omega(3^t)$ in the $T$-count, independent
of depth, so Corollary~\ref{cor:cliffT} sidesteps Proposition~\ref{prop:notriple}
by changing the resource.
Magic supplies a hardness lever free of the depth conflict but not of cost,
whether through magic-state distillation (fault-tolerant) or the still-open
resource-theoretic status of stabiliser simulation (NISQ).

The code used in this work can be found at {\hypersetup{urlcolor=[rgb]{0,0.5,0}}\url{https://github.com/grageragarces/Entanglement-geometry}}

\vspace{-0.15cm}
\section*{Acknowledgments} 
MGG is funded by the EPSRC UK Quantum Technologies Programme under grant EP/T001062/1
and VeriQloud. 
SD is funded by the ETH Zurich Quantum Center.
LP is supported by the National Research Foundation, Singapore through the National Quantum Office, hosted in A*STAR, under its Centre for Quantum Technologies Funding Initiative (S24Q2d0009). LP is also partially supported by A*STAR under its YIRG M25N8c0131.

\vspace{-0.15cm}
\bibliographystyle{IEEEtran}
\bibliography{refs}

\end{document}